\newcommand{\fixme}[1]{{{\color{blue}{[#1]}}}}
\newcommand{\mf}[1]{\mathfrak{#1}}
\newcommand{\Sym}[1]{\text{Sym}#1}
\newcommand{\be}{\begin{equation}}
\newcommand{\ee}{\end{equation}}
\newcommand{\btik}{\begin{tikzcd}}
\newcommand{\etik}{\end{tikzcd}}
\newcommand{\End}{\mathrm{End}}
\newcommand{\Hom}{\mathrm{Hom}}
\newcommand{\Mod}{\mathrm{-Mod}}
\newcommand{\tgv}{\mathfrak{d}_{\mathfrak g, V}}
\newcommand{\HR}{\CM_{K, T^*V}}
\newcommand{\hypquotient}{/\!\!/\!\!/}
\newenvironment{amatrix}[1]{\left[\begin{array}{@{}*{\numexpr#1-1}{c}|c@{}}}{\end{array}\right]}
\newcommand{\tmod}[1]{{\@displayfalse\mod{#1}}}
\renewcommand*\env@matrix[1][*\c@MaxMatrixCols c]{%
    \hskip -\arraycolsep
    \let\@ifnextchar\new@ifnextchar
    \array{#1}}
\newcommand{\C}{\mathbb C}
\newcommand{\Z}{\mathbb Z}
\newcommand{\E}{\mathbb E}
\newcommand{\fd}{\mathfrak{d}}
\newcommand{\fg}{\mathfrak{g}}
\newcommand{\fh}{\mathfrak{h}}
\newcommand{\CC}{{\mathcal C}}
\newcommand{\CE}{{\mathcal E}}
\newcommand{\CL}{{\mathcal L}}
\newcommand{\CM}{{\mathcal M}}
\newcommand{\CN}{{\mathcal N}}
\newcommand{\CT}{{\mathcal T}}
\newtheorem{Def}{Definition}[section]
\newtheorem{Thm}[Def]{Theorem}
\newtheorem{Prop}[Def]{Proposition}
\newtheorem{Cor}[Def]{Corollary}
\newtheorem{Lem}[Def]{Lemma}
\newtheorem{Rem}[Def]{Remark}
\newtheorem{Exp}[Def]{Example}
\numberwithin{equation}{section}
\title{Higgs Branches in the Omega-background via the Category of Line Operators}
\author{Thomas Karabela \& Wenjun Niu}
\date{\today}
\begin{document}

\maketitle

\begin{abstract}

 The vacuum manifold $\CM$ of a topological twist of a 3d $\CN=4$ gauge theory is a hyper-K\"ahler variety; deformations and quantizations of $\CM$ can be constructed in the framework of 3 dimensional topological quantum field theories. In particular, based on physics arguments, turning on an omega-background results in the quantization of $\C[\CM]$ as a Poisson algebra. In this paper, we implement this idea mathematically, in the context of the B-twist of 3d $\CN=4$ gauge theories, namely for Higgs branches. Our strategy is to implement omega-background in the category of line operators via the choice of a ribbon twist, and obtain the quantum Hamiltonian reduction as derived endomorphisms in the equivariant category, the category where the ribbon twist acts trivially. We apply quadratic Koszul duality to perform this computation. 
 
\end{abstract}

\tableofcontents

\section{Introduction}\label{sec:intro}

\subsection{Quantization and omega-background}

Let $G$ be a reductive Lie group and $V$ a finite-dimensional representation. The pair $(G, V)$ determines a 3d $\CN=4$ supersymmetric gauge theory of cotangent type, with gauge group $G_c$ (the compact part of $G$) and matter-fields valued in $T^*V$. This gauge theory admits two topological twists, the so-called $A$ and $B$ twist. The vacuum manifold for the B twist is called the Higgs branch $\CM_H$ and that for the A twist is called the Coulomb branch $\CM_C$. These are hyper-Kh\"ahler varieties and contain important representation-theoretic data. The variety $\CM_H$ is defined as the symplectic reduction $T^*V\hypquotient G$, while $\CM_C$ is defined in the famous work \cite{braverman2018towards} via the moduli space of triples, or the BFN space. Moreover, the famous 3d mirror symmetry \cite{intriligator1996mirror} suggests that there should be an involution acting on 3d $\CN=4$ theories such that when there is a pair of mirror dual gauge theory $(G, V)\leftrightarrow (G^!, V^!)$, there is an isomorphism of hyper-Kh\"ahler varieties $\CM_H\cong \CM_C^!$. 

The advantage of realizing $\CM_i$ for $i= H, C$ as vacuum varieties is that one can apply intuition from physics to understand the geometry of these varieties. For example, the resolutions of $\CM_i$ are related to the mass and FI deformations in the gauge theory. The other example, which is the theme of this work, is that the quantization of $\C[\CM]$ can be achieved by turning on the omega-background. See, for example, \cite[Section 6]{beem2020secondary} or \cite{yagi2014omega}. 

The idea is that $\C[\CM]$, being the algebra of local operators in a 3d TQFT, is a commutative Poisson algebra. Physically, this commutativity is due to the 3-dimensional nature of the multiplication, and the Poisson structure is a homotopical structure typical in twisted gauge theories. In more mathematical terms, local operators form an $\E_3$ algebra, whose cohomology is a $\mathbb{P}_2$ algebra (2-shifted commutative Poisson algebra). Suppose we now choose an axis in 3d and turn on an $S^1$-rotation perpendicular to this direction, then local operators become non-commutative: operators cannot pass each other since they are forced to stay on the axis. The resulting algebra is expected to be a deformation of the Poisson algebra structure on $\C[\CM]$. This action of $S^1$ is the omega-background in 3d TQFT context. 

This is a very elegant physical picture of deformation quantization. However, it is tricky to implement, to say the least. The action of $S^1$ on $\C[\CM]$ is homotopical; therefore, it is difficult to take its equivariants and compute the remaining algebra structure. In this paper, we implement this picture using the category of line operators. It turns out that by studying this category, we significantly simplify the action of $S^1$ and the question of taking $S^1$-equivariants. 

More precisely, let $\CL$ be the category of line operators. Being the category of line defects in a 3d TQFT, $\CL$ has the structure of a braided tensor category. Let $\mathbbm 1\in \CL$ be the unit object; then\footnote{In the setting of the topological twists of 3d $\CN=4$ gauge theories, $\CL$ is the derived category of some abelian category, and the endomorphism is a derived endomorphism. }
\be
\C[\CM]\cong\End_{\CL}(\mathbbm 1).
\ee
Let us turn to the question of $S^1$-action. We can put line operators along the axis fixed by $S^1$, then one obtains an action of $S^1$ on $\CL$. This action is well-known to be given by the structure of a \textit{ribbon twist} $\theta$, namely an autoequivalence of the identity functor $\mathrm{Id}_\CL$ of $\CL$ that satisfies
\be
\theta_{L_1\otimes L_2}=c_{L_2, L_1}c_{L_1, L_2} \theta_{L_1}\otimes \theta_{L_2}, \qquad \theta_{L^\vee}=\theta_L^\vee, \qquad \theta_{\mathbbm 1}=1, 
\ee
for all $L_1, L_2, L\in \CL$. Here $c_{L_1, L_2}$ is the braiding and $L^\vee$ is the rigid dual\footnote{We will assume $\CL$ is rigid. One can drop the second condition if $\CL$ is not rigid.}. Note that we don't need homotopy theory here at all to define this action! A better way to say this is that the homotopy data is hidden in the categorical data of $\CL$. The trade-off is, of course, that one needs to deal with extended objects, but that is significantly easier in our opinion. 

To be able to put a line operator in the omega-background, one requires that this object is invariant under $S^1$ rotation. This happens precisely when $\theta=1$. We find that line operators in the omega-background is roughly
\be
\CL^{S^1}=\CL/(\theta-1).
\ee
This can be made mathematically precise using derived geometry techniques \cite{preygel2011thom}, as long as one does everything in a sufficiently DG/derived manner. By definition, $\mathbbm 1\in \CL^{S^1}$. The physics intuition of omega-backgrounds then suggest that
\be
\End_{\CL^{S^1}}(\mathbbm 1)
\ee
is a deformation quantization of $\C[\CM]$. 

Having translated the physics of omega-backgrounds into a mathematically accessible form, let us explain the setting to which we apply this strategy: B-twist of 3d $\CN=4$ gauge theories, in which case $\CM=\CM_H=T^*V\hypquotient G$, as we have discussed. In this case, it is already known that $\C[\CM_H]$ admits a quantization via the quantum Hamiltonian reduction construction; in fact, this is the quantization we get from the omega-background procedure. This could serve as a sanity check of the physics proposal. Moreover, we hope that this idea can be used to study quantization of Poisson vertex algebras by studying extended defects in 4 dimensional holomorphic-topological theories.

\subsection{Line defects in the B-twist and Lie superalgebras}

We now focus on the topological B-twist of a 3d $\CN=4$ gauge theory of cotangent type, defined by $(G, V)$. Categorical aspects of line operators in the B twists of 3d $\mathcal N=4$ gauge theories were developed in a series of different works, including \cite{oblomkov2024categorical, costello2019vertex, costello2019higgs, hilburn2023tate, BCDN, gammage2025betti, creutzig2025kazhdan, dimofte2024tannakian, niu2024quantum}. For an  interested reader, see the introductions to \cite{dimofte2024tannakian} for further background and references. We will follow the approach of \cite{niu2024quantum} in studying $\CL$.

Let $\fg\ltimes V[-1]$ be the semi-direct product Lie algebra, where $V$ is in degree $1$, and let $\tgv:=T^*(\fg\ltimes V[-1])$, its cotangent Lie algebra. This Lie algebra appeared in \cite{costello2019vertex} in the construction of boundary vertex algebras for the B-twist. In \cite{niu2024quantum}, the second author clarified the relation between $\tgv$ and $\CM_H$: $\tgv$ is the \textit{tangent Lie algebra} of $\CM_H$, and, moreover, $\C[\CM_H]$ is identified with the (relative) Chevalley-Eilenberg cochian complex of $\tgv$. 

This is to be expected. It is well-known to experts that the B-twist of the gauge theory defined by $(G, V)$ is analogous to the Chern-Simons theory for $\tgv$. Following Drinfeld \cite{drinfeld1989quasi}, we take the category $\CL$ to be (the derived category of) $U(\tgv)\Mod^G$, the category of modules of the Lie algebra $\tgv$ where the action of $\fg$ integrates to an action of the group $G$. The braided tensor structure is supplied by solutions of KZ equations associated with the canonical perfect pairing of $\tgv$. This is consistent with the conformal field theory approach of \cite{costello2019vertex, costello2019higgs}. This is \textbf{not} the full category of line operators, but is rather the subcategory of purterbative line operators. However, this is enough for the problem of computing $\C[\CM_H]$, since this is a perturbative question. In conclusion, $\CL=U(\tgv)\Mod^G$.

To specify the $S^1$ action, we need to choose a ribbon element. Let $C$ be the quadratic Casimir element associated with the canonical pairing of $\tgv$, then $\theta=e^C$ provides a ribbon element and therefore an action of $S^1$. In general, one can choose a central character $J\in \fg^*[-2]$ and consider $\theta=e^{C+J}$. This does not satisfy $S\theta=\theta$, but still defines an action of $S^1$ on $\CL$. The category $\CL^{S^1}$ of line operators in the omega-background is therefore (the derived category of)
\be
U(\tgv)/(\theta-1)\Mod^G\simeq U(\tgv)/(C+J)\Mod^G,
\ee
where the second equivalence is due to the fact that $C+J$ is nilpotent; $\theta=1$ iff $C+J=0$. We can now translate the quantization by omega-background idea into the following very concrete statement
\be
\End_{U(\tgv)/(C+J)\Mod^G} (\C) \text{ is a deformation quantization of } \C[\CM_H]. 
\ee
This is the statement that we shall check in the rest of the paper, up to some mathematical nuances. In fact, we will show that this deformation quantization is the usual quantum Hamiltonian reduction. 

\subsection{Mathematical statements and strategy}\label{subsec:mathintro}

For simplicity, we will remove the requirement of the integrability of the $G$-action. One can easily recover this with a relative Chevalley-Eilenberg cochain complex. We will also replace the quotient with its DG model $U(\tgv)[\psi]$, with a differential $d\psi=C+J$ and $\psi$ of degree $1$. Let $\widehat\CM_H$ be given by $T^*V\hypquotient\widehat G$, where $\widehat G$ is the formal group. The only difference is that one takes the Lie algebra invariants of $\mu^{-1}(0)$. We consider an action of $\C^\times$ where the weights of $U(\tgv)[\psi]$ is equal to its cohomological grading. Our main result is the following statement.

\begin{Thm}\label{Thm:main}

     Let $\C$ be the trivial module of $U(\tgv)[\psi]$, then
    \be
\CC\CE (\tgv, J):= \mathrm{REnd}_{U(\tgv)[\psi]\Mod}^{\C^\times}\left( \C \right)
    \ee
     is a deformation quantization of the Poisson algebra $\C[\widehat \CM_H]$. It is identified with the usual quantum Hamiltonian reduction with respect to the same central character $J$. Here, $\mathrm{REnd}_{U(\tgv)[\psi]\Mod}^{\C^\times}$ denotes the space of $\C^\times$-enriched endomorphisms. 
    
\end{Thm}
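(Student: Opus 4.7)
The plan is to compute $\CC\CE(\tgv,J):=\mathrm{REnd}^{\C^\times}_{U(\tgv)[\psi]\Mod}(\C)$ by constructing an explicit Koszul-type resolution of the trivial module, identifying the resulting complex with a $(C+J)$-deformation of the Chevalley--Eilenberg cochain complex $\CC\CE(\tgv)$, and then matching this deformation with the BRST complex computing the quantum Hamiltonian reduction of $\CD(V)$ by $\fg$ at level $J$. The starting point is the second author's identification in \cite{niu2024quantum} of $\C[\widehat\CM_H]$ with $\CC\CE(\tgv)$: undeforming (setting $C+J=0$) the statement reduces to standard Lie-algebra Koszul duality $\mathrm{REnd}_{U(\tgv)\Mod}(\C)\simeq\CC\CE(\tgv)\simeq\C[\widehat\CM_H]$, so the real content is that the $\psi$-extension together with the $\C^\times$-enrichment upgrades this classical identification into a strict associative quantization.

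First, I would extend the CE resolution $U(\tgv)\otimes\mathrm{Sym}(\tgv[1])\twoheadrightarrow\C$ to a resolution over the DG algebra $U(\tgv)[\psi]$. Because $\psi$ is central of odd degree with $d\psi=C+J$, the map $U(\tgv)[\psi]\twoheadrightarrow U(\tgv)/(C+J)$ is itself a Koszul-type resolution, and a free resolution of $\C$ over $U(\tgv)[\psi]$ can be assembled by combining it with the CE resolution of $\C$ over $U(\tgv)$. Applying $\mathrm{Hom}_{U(\tgv)[\psi]}(-,\C)$ and keeping track of the $\C^\times$-weights then yields
$$\CC\CE(\tgv,J)\;\simeq\;\Bigl(\CC\CE(\tgv)[\hbar],\; d_{CE}+\hbar\cdot\iota_{C+J}+\cdots\Bigr),$$
where $\hbar$ is the formal generator dual to $\psi$, graded so that weight equals cohomological degree, and $\iota_{C+J}$ denotes contraction with the quadratic symbol of $C+J$ viewed as an element of $\mathrm{Sym}^2(\tgv)$. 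This is the Priddy/quadratic Koszul dual of the quadratic-linear-curved presentation of $U(\tgv)[\psi]$ promised in the introduction.

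Second, I would expand the right-hand side using the explicit decomposition $\tgv=T^*(\fg\ltimes V[-1])$. The CE complex factors into contributions from the constituents $\fg,\fg^*,V,V^*$, and one recognizes three pieces: (i) the $(V,V^*)$-factor together with the $\hbar$-deformation produces the Weyl algebra $\CD(V)$; (ii) the $\fg$-factor together with $\iota_J$ produces the BRST complex for $\fg$ at character $J$; (iii) the coupling between these—namely the quadratic moment-map pairing $\mu:\fg^*\to\mathrm{Sym}^2 V^*$ that sits inside $C$—turns $\hbar\cdot\iota_C$ into precisely the BRST differential for the quantum moment map. Taking cohomology then recovers $\bigl(\CD(V)/\CD(V)\cdot(\mu-J)\bigr)^{\fg}$, the quantum Hamiltonian reduction at level $J$, as the output of $\CC\CE(\tgv,J)$.

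The principal obstacle is the strictness issue in the first step: a priori the homotopical construction only guarantees an $A_\infty$-deformation of $\CC\CE(\tgv)$ in the $\psi$-direction, and one must verify that the $\C^\times$-equivariant endomorphism algebra is honestly associative with exactly the $C+J$-twist and no higher correction terms. This is where the quadratic Koszul-duality input is essential: strictness is extracted by identifying $\CC\CE(\tgv,J)$ with the Koszul dual of an explicit quadratic-linear presentation of $U(\tgv)[\psi]$, and the weight filtration supplied by the $\C^\times$-action is used both to control the formal-parameter expansion in $\hbar$ and to rule out higher $A_\infty$-corrections beyond the $\iota_{C+J}$-term.
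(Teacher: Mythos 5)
Your overall strategy --- resolve $\C$ over $U(\tgv)[\psi]$ by a Koszul-type complex, identify the $\C^\times$-enriched endomorphisms with a deformation of $\mathrm{CE}^*(\tgv)$, and match the result with the BRST model of quantum Hamiltonian reduction --- is the same as the paper's. But the central step is not carried out, and the way you present the answer hides exactly the part of the statement that needs proof. You describe $\CC\CE(\tgv,J)$ as $\bigl(\CC\CE(\tgv)[\hbar],\ d_{CE}+\hbar\,\iota_{C+J}+\cdots\bigr)$, i.e.\ as the \emph{commutative} algebra $\mathrm{CE}^*(\tgv)[\hbar]$ with a deformed differential. Since $\iota_{C}$ is a second-order (BV-type) operator, this is at best a complex, not a DG algebra, and the noncommutative product --- the entire content of ``deformation quantization'' --- is nowhere constructed. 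In the paper this product appears as a computed output of honest quadratic duality: one homogenizes to the quadratic algebra $A=U_\hbar(\tgv)[d,\psi]$ with relation $\{d,\psi\}=C+\hbar J$ and finds that $A^!$ has relations $[\epsilon^i,\epsilon^j]+(\pm)\,c^{ij}d^!\otimes\psi^!$ and $[\hbar^!,\epsilon^i]+\tfrac{1}{2}(\pm)f^i_{jk}\epsilon^j\epsilon^k+\langle\epsilon^i,J\rangle d^!\otimes\psi^!$; after setting $d^!=1$ the first relation is the Weyl/Clifford-type deformation of the product by $\varphi=\psi^!$ and the second is the $J$-shifted differential. Your ``assemble the CE resolution with the $\psi$-Koszul resolution'' step is precisely where this deformation is forced: the total differential on $U(\tgv)[\psi]\otimes\CC\CE(\tgv,J)^*$ squares to zero only if the dual factor carries the deformed product, so the two resolutions do not simply combine --- verifying $d^2=0$ and acyclicity \emph{is} the theorem. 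The paper proves acyclicity by a filtration/spectral-sequence argument in the auxiliary variables $\hbar$ and $d^!$ (the ``periodic localization''), for which you offer no substitute.

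Two smaller points. First, you conclude by ``taking cohomology'' to recover $\bigl(\CD(V)/\CD(V)(\mu-J)\bigr)^{\fg}$; the theorem identifies $\CC\CE(\tgv,J)$ with the DG (BRST) model $\C^J_\varphi[\widehat\CM_H]$ itself as a DG algebra, not with its cohomology, and the invariants are Lie-algebra invariants (formal group), which is why the ghosts $b_i,c^j$ survive in the answer. Second, your closing paragraph correctly identifies strictness of the $A_\infty$-structure as the obstacle and correctly names inhomogeneous quadratic duality as the tool, but invoking a quadratic-linear-curved presentation of $U(\tgv)[\psi]$ directly would require verifying Positselski-type self-consistency conditions that you do not state; the paper sidesteps this by the $\hbar$- and $d$-homogenization, which reduces everything to the classical Beilinson--Ginzburg--Soergel setting. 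As written, the proposal is a plausible outline of the paper's proof rather than a proof.
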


Let us explain in more detail the contents of the above theorem. Since the trivial module $\C$ factors through the map $U(\tgv)[\psi]\to \C[\psi]$, there is a map of algebras
\be
\C[\varphi]=\mathrm{REnd}_{\C[\psi]\Mod}^{\C^\times}\left(\C\right)\to \CC\CE(\tgv, J).
\ee
Our statement is that in an explicit DG model of the algebra $\CC\CE (\tgv)$, 

\begin{enumerate}

    \item the map above is central;

    \item the algebra $\CC\CE (\tgv)\otimes_{\C[\varphi]}\C$ is commutative;
    
    \item it is moreover identified with $\C[\widehat\CM_H]$ as a Poisson algebra. 
    
\end{enumerate}

We will compute the explicit DG model of $\CC\CE (\tgv, J)$ by finding an explicit resolution of $\C$ as a module of $U(\tgv)[\psi]$. This resolution is achieved through quadratic duality. We will review the quadratic and Koszul duality more closely in Section \ref{sec:koszul}. Here, we simply outline the strategy with which we calculate $\CC\CE (\tgv, J)$. It turns out that the output of the Koszul duality calculation is simply the quantum Hamiltonian reduction. 

If $A$ is a quadratic algebra generated by $V$ and the relation $R\subseteq V^{\otimes 2}$, then its quadratic dual $A^!$ is generated by $V^!:=V^*[-1]$ with relation $R^\perp\subseteq (V^*)^{\otimes 2}$, where $(R, R^\perp)=0$. The vector space $A\otimes (A^!)^*$ admits a square-zero differential $\sum (x_n)_R\otimes (y^n)_L$, where $\{x_n\}\subseteq V$ is a basis with dual basis $\{y^n\}$. According to \cite{BGSkoszul}, the algebra $A$ is Koszul if this complex is a resolution of $\C$ as a left $A$ module. Note, however, that one needs to be careful with the definition of $(A^!)^*$ when it is not finite-dimensional, and the solution is usually to introduce an extra $\C^\times$ action such that weight spaces are finite-dimensional, and we take the graded dual. In the case where this happens, we find that $A^!=\mathrm{REnd}_{A\Mod}^{\C^\times} (\C)$, where $\mathrm{REnd}_{A\Mod}^{\C^\times}$ means $\C^\times$-enriched Hom. 

In our case, the algebra $U(\tgv)[\psi]$ is not quadratic. We consider the quadratic algebra $A=U_\hbar (\tgv)[d, \psi]$ with relation $\{d, \psi\}=C+\hbar J$. Here $U_\hbar (\tgv)$ is the associated Reez algebra of $U(\tgv)$ over $\C[\hbar]$, which is quadratic. We can then easily compute the quadratic dual $A^!$ and the corresponding algebra map $\C[\psi]^!=\C[\varphi]\to A^!$. We will show:
\begin{enumerate}

    \item the generator $\hbar^!$ acts as a square-zero differential in $A^!$ and $d^!, \varphi=\psi^!$ are central elements;

    \item the DG algebra generated by $\tgv^!\oplus d^!\oplus \varphi$ with differential $\{\hbar^!,-\}$ and relation $d^!=1$ is identified with the quantum Hamiltonian reduction, which is an explicit quantization of $\C[\widehat \CM_H]$; this algebra is our $\CC\CE(\tgv, J)$;

    \item the product $U(\tgv)[\psi]\otimes \CC\CE(\tgv, J)$ can be deformed by an MC element to become a free resolution of $\C$. 
    
\end{enumerate}

We will construct this resolution by performing periodic localization on the Koszul complex of $A\otimes A^!$.

\begin{Rem}

    In the above and what follows, we always work in the differential-graded setting over the field $\C$. In Particular, all vector spaces are graded, and Koszul sign rule is assumed when swapping factors in tensor products of graded vector spaces. When an extra $\C^\times$ action is used, we refer to the grading by this $\C^\times$ action as weights. 
    
\end{Rem}

\subsection*{Acknowledgements}

We'd like to thank Kevin Costello for his guidance and support, and for Perimeter Institute for providing a wonderful working environment. T.K. wishes to acknowledge the financial support from the  Natural Sciences and Engineering Research Council of Canada in the form of the Undergraduate Student Research Award grant  USRA-604504-2025. W.N. would like to thank Lukas M\"uller for discussions, as well as his friend Don Manuel for many encouragements. The research of W.N. is supported by Perimeter Institute for Theoretical Physics. Research at Perimeter Institute is supported in part by the Government of Canada through the Department of Innovation, Science and Economic Development Canada and by the Province of Ontario through the Ministry of Colleges and Universities.

\section{Symplectic reductions and braided tensor categories}\label{sec:higgs}

\subsection{Symplectic reductions and quantizations}\label{subsec:higgsquan}

We fix a reductive algebraic group $G$ and a finite-dimensional representation $V$. Let $\fg=\mathrm{Lie}(G)$. The action of $G$ on $V$ induces an action of $G$ on $T^*V$. The action of $G$ on $T^*V$ preserves the canonical symplectic form, and there is a moment map:
\be
\mu: T^*V\to \fg^*
\ee
such that:
\be
\mu(v,v^*)(X)=(Xv,v^*),~\text{ for all } X, v, v^*.
\ee
The map $\mu$ is a $G$-equivariant and generically flat, but in general not smooth. The zero fibre $\mu^{-1}(0)$ is a $G$-invariant subspace. Note that when $\mu$ is not flat, we take $\mu^{-1}(0)$ to be the \textbf{derived fiber}, namely the derived (affine) scheme represented by the following Cartesian diagram:
\be
\btik
\mu^{-1}(0)\rar\dar & 0\dar\\
T^*V\rar{\mu} & \fg^*
\etik
\ee 

\begin{Def}\label{Def:HamRed}
The symplectic reduction $\CM_H$ is defined as the affine scheme:
\be
\CM_H:=\mu^{-1}(0)/\!/G,
\ee
namely $\CM_H$ is an affine variety whose space of algebraic functions is the algebra $\C[\mu^{-1}(0)]^G$. We also denote the above by $T^*V\hypquotient G$, called the hyper-K\"{a}hler quotient. 
\end{Def}

The affine scheme $\mu^{-1}(0)$ is represented by the following explicit commutative differential graded algebra. The underlying algebra is the algebra of functions on $T^*V\oplus \fg^*[-1]$. The differential is defined on generators of $\C[\fg^*[-1]]$ by $d\mu$. Explicitly, let $\{b_i\}$ be a set of linear coordinate functions on $\fg^*[-1]$ (namely, a basis for $\fg$, in degree $-1$), then the differential is given by:
\be
d b_i=(x_iv, v^*). 
\ee
This DG algebra has a natural action of $G$ and $\C[\CM_H]$ is the $G$-invariant sub-algebra. 

If we calculate the $\fg$-invariants instead of the $G$-invariants, we introduce another set of generators $\{c^i\}$, which form a basis in $\fg^*[-1]$, and introduce a differential
\be\label{eq:differential}
\begin{aligned}
    db_i&= -\sum f_{ij}^k c^jb_k-(x_iv, v^*),\qquad d \beta_m= -\sum c^i(\rho (b_i)\rhd \beta_m), \\ &d\gamma^n=-\sum c^i(\rho (b_i)\rhd \gamma^n) ,\qquad dc^i=-\frac{1}{2}\sum f^i_{jk}c^j c^k. 
\end{aligned}
\ee
Here $\beta_m, \gamma^m$ are linear coordinates of $T^*V$. Let us denote the resulting commutative DG algebra by $\C[\widehat \CM_H]$. This is a Poisson algebra, with Poisson bracket
\be\label{eq:Poisson}
\{\beta_m, \gamma^n\}=-\delta_{m}^n, \qquad \{b_i, c^j\}=-\delta_i^j. 
\ee
It is clear from the definition that $\C[\CM_H]$ is a Poisson subalgebra of $\C[\widehat\CM_H]$ via embedding 
\be
\C[\mu^{-1}(0)]^G\to \C[\mu^{-1}(0)]\to \C[\widehat \CM_H].
\ee
Note that the second embedding is not one of DG algebras. The composition is a map of DG algebras, because if a function $f$ is $H$-invariant, then $d_{CE}f=0$. 

The Poisson algebra $\C[\CM_H]$ and $\C[\widehat \CM_H]$ admits quantizations via the \textit{quantum Hamiltonian reduction} construction. More explicitly, in the case of $\C[\CM_H]$, one considers the Weyl algebra $\C_\varphi [\beta_m, \gamma^n]$ with commutation relation $[\beta_m, \gamma^n]=-\delta^n_m\varphi$. The moment map can be quantized into an algebra map $\mu: U_\varphi(\fg)\to \C_\varphi[\beta_m, \gamma^n]$. Let $J\in \fg^*$ be a character, and define $I_{\mu, J}:=A\cdot \{\mu(x)+\varphi J(x), x\in \fg\}$. The vector space
\be
\C_\varphi^J[\CM_H]:= \left(\C_\varphi[\beta_m, \gamma^n]/I_{\mu, J}\right)^G
\ee
inherits an algebra structure and is called the quantum Hamiltonian reduction. This is an explicit quantization of $\C[\CM_H]$. One can replace Lie groups by Lie algebras and obtain $\C_\varphi^J[\widehat\CM_H]$ as a quantization of $\C[\widehat \CM_H]$. Setting $\varphi=0$ recovers the original Poisson algebra.

The algebra $\C_\varphi^J[\widehat\CM_H]$ admits an explicit DG model, whose underlying algebra is simply $\C[\beta_m, \gamma^n]\otimes \C[b_i, c^j]$ with commutation relation
\be
[\beta_m, \gamma^n]=-\varphi\delta_m^n, \qquad [b_i, c^j]=-\varphi\delta_i^j,
\ee
and a differential that has the same form as equation \eqref{eq:differential} except that
\be
db_i= -\frac{1}{2}\sum f_{ij}^k (c^j b_k-b_k c^j)-(x_iv, v^*)- J_i \varphi. 
\ee
This is the DG algebra that will be matched with $\CC\CE(\tgv, J)$.

\subsection{Symplectic reductions and Lie algebras}\label{subsec:tangentLie}
\label{symred}

From $(G, V)$, we obtain a graded Lie algebra $\fh:=\fg\ltimes V[-1]$. If we fix basis $\{x_i\}\subseteq \fg$ and $\{\theta_m\}\subseteq V[-1]$, then the nontrivial commutation relations are
\be\label{eq:commuth}
[x_i, x_j]=\sum f_{ij}^k x_k, \qquad [x_i, \theta_m]=\sum_n \rho (x_i)_m^n \theta_n.
\ee
Here $\rho: \fg\to \End (V)$ is the action of $\fg$ on $V$. Let $\tgv:= T^*[-2]\fh=\fh\ltimes \fh^*[-2]$. This is a graded Lie algebra with a symmetric bilinear form $\kappa$ of degree $2$. This makes $\tgv$ a perfect Lie algebra. If we denote by $\{\overline\theta^n\}\subseteq V^*[-1]$ and $\{t^i\}\subseteq \fg^*[-2]$ sets of dual basis, then the remaining nontrivial commutation relations in $\tgv$ are
\be\label{eq:commutd}
[x_i, t^j]=\sum -f_{ik}^j t^k,\qquad [x_i, \overline \theta^n]=-\rho(x_i)^n{}_m \overline \theta^{m},\qquad [\theta_m, \overline \theta^n]=\sum_i \rho^n{}_m(x_i)t^i.
\ee
The relation between symplectic reduction $\widehat \CM_H$ was explored in \cite{niu2024quantum}, based on works \cite{roberts2010rozansky, costello2019vertex}. Most importantly, we have the following result. 

\begin{Prop}

    There is an identification of DG algebras
    \be
\mathrm{CE}^*(\tgv)=\C[\widehat\CM_H],
    \ee
    such that the Poisson structure of $\C[\widehat\CM_H]$ is symplectic and is induced by the perfect pairing $\kappa$. Here, $\mathrm{CE}^*(\fg)$ denotes the Chevalley-Eilenberg cochain complex of a DG Lie algebra $\fg$. 
    
\end{Prop}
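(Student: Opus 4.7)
The plan is to directly identify the Chevalley-Eilenberg cochain complex $\mathrm{CE}^*(\tgv)$ with the explicit DG algebra $\C[\widehat\CM_H]$ described in Subsection~\ref{subsec:higgsquan}, and then match the Poisson structures. Both sides are free graded-commutative algebras on generators that will line up tautologically after dualization and shift, and the CE differential, once unwound using the brackets \eqref{eq:commuth}--\eqref{eq:commutd}, should reproduce \eqref{eq:differential} term by term.

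The first step is to set up the isomorphism of underlying graded algebras. By definition $\mathrm{CE}^*(\tgv) = \mathrm{Sym}(\tgv^\vee[-1])$. Writing $\tgv = \fg \oplus V[-1] \oplus V^*[-1] \oplus \fg^*[-2]$ with generators $x_i, \theta_m, \bar\theta^n, t^i$ in cohomological degrees $0, 1, 1, 2$ respectively, the shifted dual $\tgv^\vee[-1]$ has generators in degrees $1, 0, 0, -1$, which I identify with $c^i, \gamma^m, \beta_n, b_i$ via the canonical dual pairings. This gives a tautological isomorphism with the underlying graded polynomial algebra of $\C[\widehat\CM_H]$.

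The second step is to verify that the CE differential reproduces \eqref{eq:differential}. Using $d\xi(X,Y) = -(-1)^{|\xi|}\xi([X,Y])$ on each dual generator and plugging in the brackets of $\tgv$: the internal bracket of $\fg$ yields $dc^i = -\tfrac{1}{2} f^i_{jk} c^j c^k$; the adjoint action of $\fg$ on $V$, $V^*$, and $\fg^*$ (from the remaining terms of \eqref{eq:commuth}--\eqref{eq:commutd}) produces the $c^i$-linear terms in $d\beta_m$, $d\gamma^n$, $db_i$; and the key non-adjoint bracket $[\theta_m, \bar\theta^n] = \sum_i \rho^n{}_m(x_i)\, t^i$ dualizes, via $b_i \leftrightarrow t^i$, into exactly the moment-map quadratic term $-(x_iv, v^*)$ in $db_i$. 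No further terms appear, and this matches \eqref{eq:differential} after sign bookkeeping.

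The third step is to match Poisson structures. The degree-$2$ perfect pairing $\kappa$ on $\tgv$ makes $\tgv[-1]$ a symplectic graded vector space with a degree-$0$ form, and hence induces a standard degree-$0$ Poisson bracket on $\mathrm{Sym}(\tgv^\vee[-1])$ via its inverse. Since $\kappa$ pairs $\fg$ with $\fg^*[-2]$ and $V[-1]$ with $V^*[-1]$, the induced brackets on the dual generators give precisely $\{b_i, c^j\} = -\delta_i^j$ and $\{\beta_m, \gamma^n\} = -\delta_m^n$ of \eqref{eq:Poisson}, and non-degeneracy of $\kappa$ guarantees symplecticity. The main obstacle throughout is sign bookkeeping across the various shifts, dualizations, and Koszul signs for the odd generators; I would pin down one consistent convention for $\mathrm{Sym}$ of a graded vector space, the shift isomorphism, and the CE differential, then verify one representative case (the full derivation of $db_i$) in detail to anchor all the signs.
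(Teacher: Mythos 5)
Your proposal is correct, and it reaches the same conclusion as the paper by a more hands-on route. The paper's own proof is very short: it cites \cite[Lemma 4.5]{niu2024quantum} for the identification of the underlying DG algebras, and then handles the Poisson structure conceptually, by observing that the identification forces the tangent complex of $\widehat\CM_H$ to be $\tgv[1]\otimes\C[\widehat\CM_H]$ (citing \cite{hennion2018tangent}), so that the degree-$2$ perfect pairing $\kappa$ gives an identification $\CT_{\widehat\CM_H}\cong\CT_{\widehat\CM_H}^*$, i.e.\ a shifted symplectic structure whose bracket is then read off on generators. You instead do everything at the level of generators: matching $c^i,\gamma^m,\beta_n,b_i$ with the shifted dual basis of $\tgv$, unwinding the CE differential against the brackets \eqref{eq:commuth}--\eqref{eq:commutd} (correctly isolating $[\theta_m,\bar\theta^n]$ as the source of the moment-map term in $db_i$), and writing the bracket directly from $\kappa^{-1}$. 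What your version buys is self-containedness and an explicit check that the differential \eqref{eq:differential} really is $d_{\mathrm{CE}}$; what the paper's version buys is brevity and a structural explanation of \emph{why} the Poisson structure is symplectic (non-degeneracy of $\kappa$ at the level of the tangent complex) rather than a term-by-term match. One small correction: the space of generators $\tgv^*[-1]$ is identified via $\kappa$ with $\tgv[1]$, not $\tgv[-1]$, which is where the induced pairing has degree $0$; this is exactly the kind of shift bookkeeping you flag, and it does not affect the substance of the argument.
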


\begin{proof}

    The identification of algebras is clear (\textit{c.f.} \cite[Lemma 4.5]{niu2024quantum}). This identification implies that the tangent complex $\CT_{\widehat\CM_H}$ of $\widehat \CM_H$ is precisely $\tgv[1]\otimes \C[\widehat \CM_H]$ \cite{hennion2018tangent}. The bilinear form $\kappa$ induces a symplectic structure, and therefore an identification $\CT_{\widehat\CM_H}\cong \CT_{\widehat\CM_H}^*$. The corresponding Poisson structure is easily identified with the one in equation \eqref{eq:Poisson}. 
    
\end{proof}

We would like to explore the relation between quantizations $\C_\varphi^J[\widehat\CM_H]$ of $\C[\widehat\CM_H]$ and the Lie algebra $\tgv$. As we have explained in the introduction, our approach utilizes the ribbon tensor category constructed from $\tgv$ and the associated $S^1$-equivariant category. 

\subsection{Quantized universal enveloping algebras}

The construction of the braided tensor category associated to $\tgv$ is via Drinfeld's quantization of perfect Lie algebras into quasi-triangular quasi-Hopf algebras.  

\begin{Def}

A quasi-triangular quasi-Hopf algebra is a tuple $(A, \Delta, \psi, \Phi, R)$, where:

\begin{enumerate}
\item $A$ is a topological algebra over $\C$, $\Delta:A\to A\otimes A$ and $\psi: A\to \C$ are continuous algebra homomorphisms such that $\psi (1)=1$ and $\Delta (1)=1$. 

\item Invertible elements $R\in   A\otimes A$ and $\Phi\in  A\otimes A \otimes A$. 

\end{enumerate}

These satisfy:
\be
\begin{aligned}
(1\otimes \Delta) \Delta&=\Phi \cdot (\Delta\otimes 1)(\Delta)\cdot \Phi^{-1}\\
(1\otimes 1\otimes \Delta)(\Phi)\cdot (\Delta\otimes 1\otimes 1)(\Phi)&=(1\otimes \Phi)\cdot (1\otimes \Delta\otimes 1)(\Phi\cdot (\Phi\otimes 1)\\
(\psi\otimes 1)\circ \Delta &=1=(1\otimes \psi)\circ \Delta\\
(1\otimes \psi\otimes 1)(\Phi)& = 1\\
(\Delta)^{op}&=R\cdot \Delta\cdot R^{-1}\\
(\Delta\otimes 1)(R)&=\Phi^{312} R^{13}(\Phi^{132})^{-1}R^{23}\Phi,\\
(1\otimes \Delta)(R)&=(\Phi^{231})^{-1}R^{13}\Phi^{213}R^{12}\Phi^{-1}
\end{aligned}
\ee

\end{Def}

\begin{Rem}

    Technically speaking, the definition of a quasi-Hopf algbera includes an antipode $S$. We omit it here since the conditions involving $S$ are technical. In the examples coming from Drinfeld's quantization of perfect Lie algebras, the antipode is the classical antipode. 
    
\end{Rem}

From a quasi-triangular quasi-Hopf algebra, one obtains a rigid braided tensor category $A\Mod$, the category of smooth $A$-modules. 

Let us now fix $\fd=\oplus_n \fd_n$ a finite-dimensional graded Lie algebra with a perfect pairing $\kappa$ of degree $2$. This pairing defermines a Poisson structure on the commutative algebra $\mathrm{CE}(\fd)$, the Chevalley-Eilenbert cochain complex of $\fd$. In applications, $\fd=\tgv$.

Let $C\in U(\fd)$ be the quadratic Casimir and $\Omega=\Delta (C)-C\otimes 1-1\otimes C$. In \cite{drinfeld1989quasi}, Drinfeld constructed a quasi-triangular quasi-Hopf algebra structure on $U(\fd)$, such that $\Delta$ and $\psi$ are the standard symmetric coproduct and counit, but $R=e^\Omega$ and $\Phi_{KZ}$ is the Drinfeld associator, based on solutions of Knizhnik-Zamolodchikov equations. Here we consider a topology where the action of $\fd_n$ for $n\ne 0$ is locally nilpotent. This is not too restrictive, since on any vector space concentrated on finitely many homological degrees, the action of $\fd_n$ is necessarily nilpotent for $n\ne 0$. The condition ensures that $R$ and $\Phi_{KZ}$ are well-defined. 

In particular, the quasi-triangular quasi-Hopf algebra structure induces on the category $U(\fd)\Mod$ the structure of a rigid braided tensor category. This category admits a ribbon structure, given by the element $\theta=e^C$. In \cite{niu2024quantum}, the second author showed that in the case when $\fd=\tgv$, this category admits an alternative description as modules of a quasi-triangular Hopf algebar (a quautum group). The underlying algebra is still $U(\fd)$, but the Hopf structure comes from recognizing this algebra as the double of $U(\fg\ltimes V[-1])$ with its ordinary Hopf algebra structure. We will not use this description here. 

\begin{Rem}
    In fact, for every central element $J\in Z(\fd)$ of degree $2$, one obtains a new ``ribbon element" $\theta_J=e^{C+J}$. It does not satisfy $S\theta=\theta$, but still defines a valid $S^1$ action. 
\end{Rem}

Since $\mathrm{CE}(\fd)=\End_{U(\fg)\Mod}(\C)$, the braided tensor structure of $U(\fd)\Mod$ induces the structure of an $\E_3$ algebra on $\mathrm{CE}(\fd)$. This should be enough to study the $S^1$-equivariance. However, since this structure is very homotopical, it is difficult to extrac a useful information from it. Instead, we consider the calculation in the $S^1$-equivariant category. Let us now choose a twist $\theta=e^{C+J}$, where $C$ is the quadratic Casimir and $J$ a central element of degree $2$. The element $\theta$ induces the action of $S^1=B\Z$ on $U(\fd)\Mod$. By \cite[Lemma 6.15]{preygel2011thom} (the straightforward application of the arguments to $U(\fd)$), we have an equivalence
\be
U(\fd)\Mod^{S^1}\simeq U(\fd)/(\theta-1)\Mod.
\ee
Since in the category $U(\fd)\Mod$, the action of $C+J$ is locally nilpotent, quotienting by $\theta-1$ is equivalent to quotienting by $\log (\theta)=C+J$, namely
\be
U(\fd)\Mod^{S^1}\simeq U(\fd)/(\theta-1)\Mod\simeq U(\fd)/(C+J)\Mod.
\ee
The trivial representation defines an object $\C$ in the $S^1$-equivariant category. We are therefore led to consider the algebra
\be
\CC\CE (\fd, J):=\End_{U(\fd)/(C+J)\Mod}\left( \C\right).
\ee
We would like to calculate this explicitly and verify that $\CC\CE (\fd, J)=\C_\varphi^J[\widehat \CM_H]$, namely that it provides a quantization to the Poisson algebra $\mathrm{CE}^* (\fd)$. We perform this calculation by applying Koszul duality and periodic localization. In the next section, we explain our strategy in greater detail. 

\section{Quadratic Koszul duality}\label{sec:koszul}

In this section, we review quadratic and Koszul duality following \cite{BGSkoszul}. We will need a DG (differential graded) version of this duality, and the results of \textit{op. cit} extends naturally to this setting. (Follow Koszul sign rule.)

\subsection{Quadratic algebras and Koszul algebras}

We first recall the definition of a quadratic algebra.

\begin{Def}
    A quadratic algebra is an algebra $A$ such that there exists a finite-dimensional vector space $V$ and $R\subseteq V\otimes V$ such that
    \be
        A \cong T(V)/(R).
    \ee
    This inherits an $\C^\times$ action from $T(V)$, under which $V$ has weight $1$. 
\end{Def}

\begin{Rem}

    Let us remind readers of our convention that $V$ is a graded (DG) vector space.  We refer to the weights of $\C^\times$ simply as weights, which are not to be confused with grading. 
    
\end{Rem}

We now recall the definition of a Koszul algebra. 

\begin{Def}

Let $A=\oplus_{d\geq 0}A_d$ be a $\C^\times$-equivariant algebra with non-negative weights. We call $A$ a Koszul algebra if $A_0=A/A_{>0}$ is semisimple, and admits a $\C^\times$-equivariant projective resolution
    \be
        \cdots \rightarrow P^2 \rightarrow P^1 \rightarrow P^0 \rightarrow A_0 \rightarrow 0
    \ee
    such that $P^i$ is generated by its weight $i$ component. 
    
\end{Def}

\begin{Rem}

    In this paper, we always assume $A_0=\C$. 
    
\end{Rem}

\begin{Exp}
    Consider the polynomial ring $\C[x] = \textstyle \bigoplus_{j=0}^{\infty}\C x^j$. We have an $\C^\times$-equivariant free $\C[x]$-resolution $\C$
    \be
        0 \rightarrow \C[x]\langle 1\rangle \xrightarrow{x}\C[x]\xrightarrow{0}\C \rightarrow 0
    \ee
  Here $\langle-\rangle$ denotes a shift of both $\C^\times$-equivariant and cohomological degree. The first operation being multiplication by $x$, and the second is the action of $x \mapsto 0$. This shows that it is a Koszul Algebra. 
\end{Exp}

Having established the notions of quadratic and Koszul algebras, and seen that Koszul algebras come equipped with a very structured graded resolution, it is natural to ask what does the dual of such objects look like. This leads us to the definition of the Quadratic dual. Let $A = T(V)/\langle R \rangle$ be a quadratic algebra, where $\langle R \rangle$ is generated by the quadratic relation set $R \subseteq V \otimes_{k} V$.

\begin{Def}

    The quadratic dual algebra of $A$, denoted $A^!$, is defined as 
    \be
        A^! = T(V^*[-1])/\langle R^{\perp}\rangle
    \ee
    where $R^{\perp} = \{\alpha \in (V^*)^{\otimes 2}~\vert~\alpha(r) = 0 ~\text{for all}~ r \in R\}$ denotes the perpendicular subspace of $R$. 
    
\end{Def}

Let $A$ be a quadratic algebra, and $A^!$ its quadratic dual. We now consider the complex built from $A \otimes (A^!)^*$. Note that since $A^!$ could be infinite-dimensional, we must specify what we mean by $(A^!)^*$. Since $A^!$ is the quotient of the tensor algebra $T(V^*[-1])$, for each $d\in \Z$, the subspace of $(A^!)_d$ with weight $d$ is finite dimensional. We define $(A^!)^*$ to be the $\C^\times$-equivariant dual. Under this definition, $A \otimes (A^!)^*$ is a $\C^\times$-equivariant vector space of the form
\be
    A\otimes(A^!)^* = \bigoplus_{i,j \geq 0}  A_i\otimes(A^!_j)^*
\ee
Consider the following element
\be
\alpha=\sum (x_a)_R\otimes (y^a)_L, \qquad \{x_a\}\subseteq V \text{ basis with dual basis } \{y^a\}\subseteq V^*[-1].
\ee
 Here $(x_a)_R$ denotes the right multiplication by $x_a$ and $(y^a)_L$ denotes the dual of the left multiplication on $A^!$. The following was proven in \cite{BGSkoszul}. 

 \begin{Lem}
 
     $\alpha$ is an MC element, namely it satisfies $d\alpha=\alpha^2=0$, where $d$ is the differential on $ A\otimes(A^!)^*$.
     
     \end{Lem}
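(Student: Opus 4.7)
The plan is to recognize $\alpha$ as the image of a canonical, basis-independent element, and then reduce both MC equations to a single orthogonality fact about $R$ and $R^\perp$. First I would observe that the sum
\[
\eta := \sum_a x_a \otimes y^a \;\in\; V \otimes V^*[-1]
\]
is basis-independent: under the identification $V \otimes V^* \cong \End(V)$ (absorbing the overall $[-1]$ shift), $\eta$ represents the identity endomorphism $\mathrm{id}_V$. The maps $V \to \End(A)$ by right multiplication and $V^*[-1] \to \End((A^!)^*)$ by the dual of left multiplication assemble into a DG chain map
\[
V \otimes V^*[-1] \longrightarrow \End(A) \otimes \End((A^!)^*),
\]
under which $\eta \mapsto \alpha$. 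Consequently, $d\alpha = 0$ follows from $d\eta = 0$, and the latter holds because the identity morphism is always closed for the commutator differential $[d_V,-]$ on $\End(V)$.

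For $\alpha^2 = 0$, I would expand the composition in the graded tensor algebra, carefully tracking the Koszul sign for moving $(L_{y^a})^*$ past $R_{x_b}$. Using $R_{x_a}R_{x_b} = R_{x_b x_a}$ and $(L_{y^a})^*(L_{y^b})^* = (L_{y^b y^a})^*$, the operator $\alpha^2$ is the image, under the natural action map $A \otimes A^! \to \End(A) \otimes \End((A^!)^*)$ sending $\xi \otimes \omega \mapsto R_\xi \otimes (L_\omega)^*$, of the weight-$2$ element
\[
\sum_{a,b} \pm\, (x_b x_a) \otimes (y^b y^a) \;\in\; A_2 \otimes A^!_2 \;=\; (V^{\otimes 2}/R)\otimes\bigl((V^*[-1])^{\otimes 2}/R^\perp\bigr).
\]
With the correct Koszul signs this is the image of $\mathrm{id}_{V^{\otimes 2}} \in V^{\otimes 2} \otimes (V^{\otimes 2})^*$ under the double-quotient map. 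Using the reflexivity $R = R^{\perp\perp}$ to identify the target with $\Hom(R, V^{\otimes 2}/R)$, the element $\mathrm{id}_{V^{\otimes 2}}$ corresponds to the composition $R \hookrightarrow V^{\otimes 2} \twoheadrightarrow V^{\otimes 2}/R$, which is zero. Hence $\alpha^2 = 0$.

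The hard part will be the Koszul-sign bookkeeping in the expansion of $\alpha^2$: one must verify that the signs picked up by swapping the middle tensor factors in $\eta \otimes \eta$ agree precisely with the canonical signs in the isomorphism $(V \otimes V)^* \cong V^* \otimes V^*$, so that the rearranged element is genuinely $\mathrm{id}_{V^{\otimes 2}}$. Once these conventions are fixed consistently, both MC equations reduce to the same clean orthogonality statement about $R$ and $R^\perp$; in the ungraded case the signs are trivial and the argument is immediate.
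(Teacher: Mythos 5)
Your argument is essentially correct, and it is in fact the standard proof of this statement; the paper itself offers no proof, deferring entirely to \cite{BGSkoszul}, so what you have written is a reconstruction of the argument in that reference rather than an alternative to anything in the text. The two reductions you make are the right ones: $d\alpha=0$ because $\alpha$ is the image of the canonical (hence closed) element of $V\otimes V^*[-1]\cong\End(V)[-1]$ under the chain map given by right multiplication on $A$ tensored with the dual of left multiplication on $A^!$; and $\alpha^2=0$ because, using $R_{x_a}R_{x_b}=R_{x_bx_a}$ and $(L_{y^a})^*(L_{y^b})^*=(L_{y^by^a})^*$, the square is controlled by an element of $A_2\otimes A^!_2=(V^{\otimes 2}/R)\otimes\bigl((V^*)^{\otimes 2}/R^\perp\bigr)\cong\Hom(R,V^{\otimes 2}/R)$ which is the image of $\mathrm{id}_{V^{\otimes 2}}$, i.e.\ the composite $R\hookrightarrow V^{\otimes 2}\twoheadrightarrow V^{\otimes 2}/R=0$. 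Two small remarks. First, you do not need reflexivity $R=R^{\perp\perp}$: the identification $(V^*)^{\otimes 2}/R^\perp\cong R^*$ is immediate from the definition of $R^\perp$ as the annihilator of $R$ together with finite-dimensionality of $V$. Second, the one step you defer --- checking that the Koszul signs produced by commuting $(L_{y^a})^*$ past $R_{x_b}$ reproduce exactly the signs in the canonical isomorphism $(V\otimes V)^*\cong V^*\otimes V^*$ --- is genuinely where an error could hide in the DG setting, since the sign $(-1)^{|y^a||x_b|}$ depends on the indices and is not a global factor; it does work out with the conventions the paper uses (Koszul sign rule throughout), but a complete write-up should carry out that verification rather than assert it.
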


We call the complex $(A\otimes (A^!)^*, d+\alpha)$ the Koszul complex of the quadratic algebra $A$. The following is the important result we need to use. 

\begin{Thm}[\cite{BGSkoszul}]

    Any Koszul ring is quadratic. A quadratic ring is Koszul if and only if the Koszul complex is a resolution of $\C$. In this case
    \be
A^!=\mathrm{REnd}_{A\Mod}^{\C^\times} (\C),
    \ee
    where $\mathrm{REnd}^{\C^\times}$ means derived endomorphisms that are weighted under $\C^\times$ (or $\C^\times$-enriched endomorphisms). 
    
\end{Thm}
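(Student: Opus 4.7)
The plan is to establish the three assertions in sequence: that any Koszul ring is quadratic; that a quadratic algebra $A$ is Koszul if and only if its Koszul complex resolves $\C$; and that in this case $A^! = \mathrm{REnd}_{A\Mod}^{\C^\times}(\C)$.

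For the first assertion, I would start from a $\C^\times$-equivariant projective resolution $\cdots \to P^2 \to P^1 \to P^0 \to \C \to 0$ with $P^i$ generated in weight $i$. Since $A_0 = \C$, the augmentation forces $P^0 = A$. Its kernel is the augmentation ideal $A_{>0}$, which coincides with the image of $P^1$; generation of $P^1$ in weight $1$ forces $A_{>0}$ to be generated as an $A$-module by $A_1$, so $A$ is generated as an algebra by $A_1$. Repeating the argument for $\ker(P^1 \to P^0)$, which is covered by $P^2$ (generated in weight $2$), shows that the relations among the $A_1$-generators live in $A_1 \otimes A_1$. Hence $A \cong T(A_1)/\langle R \rangle$ with $R \subseteq A_1 \otimes A_1$, i.e.\ $A$ is quadratic.

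For the second assertion, the $(\Leftarrow)$ direction is immediate: the Koszul complex $K^\bullet$ with $K^i = A \otimes (A^!_i)^*$ is a $\C^\times$-equivariant free resolution of $\C$ whose $i$-th term is generated in weight $i$, which is precisely the Koszulity condition. For the $(\Rightarrow)$ direction, I would realize $K^\bullet$ as a subcomplex of the normalized bar resolution $\mathrm{Bar}^\bullet(A) = A \otimes T(A_{>0})^{\otimes \bullet}$ via the embedding that picks out the piece of smallest possible total weight in each cohomological degree. If $A$ is Koszul, a minimal resolution $P^\bullet$ witnesses that $\mathrm{Tor}^A_i(\C, \C)$ is concentrated in weight $i$. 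A weight-by-weight comparison then shows that the Koszul subcomplex captures all of $\mathrm{Tor}$: the terms in the bar resolution of weight $> i$ in cohomological degree $i$ cannot contribute new cohomology classes, and an induction on cohomological degree reduces acyclicity of $K^\bullet$ to this diagonal concentration.

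For the final identification, apply $\mathrm{Hom}_A^{\C^\times}(-, \C)$ to the Koszul resolution. Each term $\mathrm{Hom}_A^{\C^\times}(A \otimes (A^!_i)^*, \C)$ equals $((A^!_i)^*)^* = A^!_i$ by the finite-dimensionality of the weight pieces, recovering $A^!$ as a weighted graded vector space. The induced differential vanishes because the Koszul differential $\alpha = \sum (x_a)_R \otimes (y^a)_L$ involves right multiplication by elements of $V \subset A_{>0}$, which annihilates $\C$. The algebra structure on $\mathrm{REnd}_{A\Mod}^{\C^\times}(\C)$ is given by Yoneda composition of self-maps of the Koszul resolution; unpacking this via the coalgebra structure on $(A^!)^*$ dual to the algebra structure on $A^!$ identifies Yoneda composition with the multiplication of $A^!$. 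The main obstacle is the forward direction of the second assertion: the weight-degree bookkeeping in the bar resolution that converts diagonal concentration of $\mathrm{Tor}$ into acyclicity of the Koszul complex is the technical heart of \cite{BGSkoszul}, while the first and third assertions are essentially formal once this acyclicity is in hand.
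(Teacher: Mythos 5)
The paper does not prove this statement---it is quoted directly from \cite{BGSkoszul}---so there is no internal argument to compare against; your outline correctly reproduces the standard proof from that reference. Specifically: linearity of a Koszul resolution forces generation in weight one and quadratic relations; the Koszul complex embeds as the diagonal (weight $=$ cohomological degree) subcomplex of the bar resolution, and its acyclicity is equivalent to $\mathrm{Tor}^A_i(\C,\C)$ being concentrated in weight $i$; and dualizing the resulting minimal resolution gives $A^!_i$ in degree $i$ with vanishing differential, with Yoneda composition matching the multiplication of $A^!$. The only caveats are the one you already flag (the forward direction of the second assertion is where the real work lives) together with the opposite-algebra/sign bookkeeping required in the paper's DG setting with Koszul sign rules and the $\C^\times$-graded dual; neither is a gap in the approach.
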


\begin{Rem}

    The reason we need $\C^\times$-enriched homs is that otherwise the endomorphism space could be a completion of $A^!$ by some topology, related to the fact that the linear dual of $A^!$ is bigger than its $\C^\times$-equivariant dual. 
    
\end{Rem}

\subsection{Example: universal enveloping algebra}\label{subsec:universal}

Let us consider a relevant example that encapsulates the previous section. Consider an $n$-dimensional graded Lie algebra $\mf d$, and its universal enveloping algebra $U(\mf d)$. 
\begin{Prop}
    The quadratic dual of $U_\hbar (\fd)$ is the algebra generated by $\hbar^!, \epsilon^i=x_i^!$ with relations
    \be
        \langle (\hbar^!)^2, [\hbar^!,\epsilon^i] + (-1)^{(|\epsilon^j|+1)|\epsilon^k|}1/2 f_{jk}^i\epsilon^j\epsilon^k, [\epsilon^i,\epsilon^j]  \rangle 
    \ee
\end{Prop}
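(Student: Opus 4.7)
The plan is to compute the orthogonal space $R^\perp$ directly from an explicit quadratic presentation of $U_\hbar(\fd)$, and then match it against the three families of relations claimed in the statement. First I would take the generating space to be $V = \C \hbar \oplus \fd$, with $\hbar$ in weight one, and read off the quadratic relations from the Rees construction: $R \subseteq V^{\otimes 2}$ is spanned by the centrality relations $\hbar \otimes x_i - x_i \otimes \hbar$ together with the Rees-deformed Lie relations $r_{ij} = x_i \otimes x_j - (-1)^{|x_i||x_j|} x_j \otimes x_i - f_{ij}^k\, \hbar \otimes x_k$.

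Next, with dual basis $\hbar^!, \epsilon^i$ of $V^! = V^*[-1]$ and the Koszul-graded pairing $V^! \otimes V^! \times V^{\otimes 2} \to \C$, I would verify that each of the three proposed elements lies in $R^\perp$. For $(\hbar^!)^2$ the pairing vanishes because no $\hbar \otimes \hbar$ ever appears in a generator of $R$. For $[\epsilon^i, \epsilon^j]$ the pairing against the centrality relations vanishes because $\epsilon^a(\hbar) = 0$, while against $r_{kl}$ one pairs an expression that is graded-symmetric in $i,j$ against one that is graded-antisymmetric in $k,l$, and the $\hbar \otimes x_m$ piece drops out for the same reason. For the mixed relation $[\hbar^!, \epsilon^i] + \tfrac{1}{2}(-1)^{(|\epsilon^j|+1)|\epsilon^k|} f_{jk}^i \epsilon^j \epsilon^k$, the only potentially nonzero contribution from the commutator piece comes from pairing with the $\hbar \otimes x_l$ term inside some $r_{jk}$, and the added quadratic term is chosen to cancel exactly this contribution, which is what pins down the prefactor $\tfrac{1}{2}$ and the displayed sign.

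To see that these three families actually span $R^\perp$, and so give the complete set of relations, I would carry out a graded-dimension count. Writing $\dim V^{\otimes 2} = \dim R + \dim R^\perp$ and comparing both sides in each weighted and cohomologically graded piece, one checks that the single $(\hbar^!)^2$, the $\dim \fd$ mixed relations, and the graded-symmetric pairs among the $\epsilon^i$ exactly exhaust $R^\perp$.

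The main delicate point will be the bookkeeping of Koszul signs when pairing $V^! \otimes V^!$ with $V \otimes V$ and then propagating those signs through graded commutators on the dual side; this is precisely what pins down the exact sign $(-1)^{(|\epsilon^j|+1)|\epsilon^k|}$ and the factor $\tfrac{1}{2}$ in the mixed relation. Once the sign conventions are fixed, all three verifications are essentially routine.
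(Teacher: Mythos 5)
Your proposal follows essentially the same route as the paper's proof: present $U_\hbar(\fd)$ via the Rees relations, verify each of the three claimed families pairs to zero with $R$, and conclude by a dimension count that they span $R^\perp$. The only minor gloss is that the pairing of $[\hbar^!,\epsilon^i]$ with the centrality relation $x\otimes\hbar-\hbar\otimes x$ is not term-by-term zero but cancels after a sign computation, which the paper carries out explicitly.
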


\begin{proof}
    We have seen that $U_\hbar (\fd)$ is a quadratic algebra with relations $R$ spanned by
    \be\label{eq:Urelations}
        \langle  x \otimes  y - (-1)^{|x||y|} y \otimes  x - \hbar \otimes [x,y], ~ x \otimes \hbar - \hbar \otimes x\rangle 
    \ee
    We need to find $R^\perp\subseteq \fg^!\oplus \C\hbar^!$. Note that $\mathrm{dim}(R)=\textstyle \binom{n}{2} + n$ so $\mathrm{dim}(R^\perp)=(n+1)(n+2)/2$. We claim that the following span $R^\perp$
    \be
        \langle (\hbar^!)^2, [\hbar^!,\epsilon^i] +(-1)^{(|\epsilon^j|+1)|\epsilon^k|} 1/2f_{jk}^i\epsilon^j\epsilon^k, [\epsilon^i,\epsilon^j]  \rangle.
    \ee
    Here, all the commutators are graded commutators. It is clear that they are linearly independent and have the correct dimension; we just need to show that they pair trivially with $R$.

    It is clear for $(\hbar^!)^2$. Consider the relation $[\hbar^!,\epsilon^i] =- (-1)^{(|\epsilon^j|+1)|\epsilon^k|}1/2f_{jk}^i\epsilon^j\epsilon^k$. The RHS pairs trivially with $x \otimes \hbar-\hbar\otimes x$, where the pairing of the LHS with this element reads
    \be
-\langle \hbar^!\otimes \epsilon^i, \hbar \otimes x\rangle - (-1)^{|\epsilon^i|}\langle \epsilon^i\otimes \hbar^!, x\otimes \hbar\rangle=-\langle \hbar^!, \hbar \rangle \langle \epsilon^i, x\rangle- (-1)^{|\epsilon^i|}(-1)^{|x|}\langle \epsilon^i, x\rangle\langle \hbar^!, \hbar\rangle, 
    \ee
where the $(-1)^{|x|}$ comes from exchanging $1\otimes \hbar^!$ with $x\otimes 1$. This is non-zero only if $|x|=|\epsilon^i|+1$ in which case the above vanish. 

Consider now the pairing between $[\hbar^!,\epsilon^i] =- 1/2(-1)^{(|\epsilon^j|+1)|\epsilon^k|}f_{jk}^i\epsilon^j\epsilon^k$ and the first relation of equation \eqref{eq:Urelations}. The pairing of LHS receives a contribution of the form
\be
-\langle \hbar^!\otimes \epsilon^i, \hbar \otimes [x,y]\rangle=- \langle \epsilon^i, [x,y]\rangle. 
\ee
The pairing of the RHS is of the form
\be
\begin{aligned}
&-\frac{1}{2}(-1)^{(|\epsilon^j|+1)|\epsilon^k|}f^i_{jk}\left((-1)^{|\epsilon^k||x|} \langle \epsilon^j, x\rangle \langle \epsilon^k, y\rangle-(-1)^{|x||y|} (-1)^{|\epsilon^k||y|}\langle \epsilon^j, y\rangle \langle \epsilon^k, x\rangle\right)\\ &=-\frac{1}{2}\left( \langle \epsilon^i, [x,y]\rangle-(-1)^{|x||y|}\langle \epsilon^i, [y,x]\rangle\right)=-\langle \epsilon^i, [x,y]\rangle
\end{aligned}
\ee
Here we used that the pairing is non-zero only if $|\epsilon^k|=|y|+1$ in the first term and $|\epsilon^k|=|x|+1$ in the second. The two sides agree, so their difference pair trivially with the first term.

Lastly, consider the relation $[\epsilon^i,\epsilon^j] = 0$. This relations again pairs trivially with the second relation. Pairing with the first relation, it only receives contribution from pairing with $x\otimes y-(-1)^{|x||y|}y\otimes x$. This must be trivial since one of these relations is graded symmetric and another is graded antisymmetric. This completes the proof. 
    
\end{proof}

\begin{Rem}

    Equipping $U_\hbar (\fd)^!$ with the internal differential $d = \text{ad}_{\hbar^!}$ yields $d(\epsilon^i) =-(-1)^{(|\epsilon^j|+1)|\epsilon^k|} 1/2f_{jk}^i\epsilon^j\epsilon^k$ and $d^2 = 0$. This is exactly the Chevalley-Eilenberg differential $d_{\text{CE}}$ on $\text{CE}^*(\mf d) = \Sym(\mf d^*[-1])$. Following the method of periodic localization, in which we set $\hbar=1$ on one side and view the other side as a DG algebra, we recover the Chevalley-Eilenberg complex of $U(\fd)$. 
    
\end{Rem}

\section{Quantization of symplectic reductions}\label{sec:higgskoszul}

In this section, we apply quadratic duality to the set-up of Section \ref{sec:higgs}. Recall from this section, we explained that one would like to study $U(\fd)/(C+J)$ and compute the derived endomorphism of the trivial module. We turn this into a quadratic algebra by considering the algebra
\be\label{eq:quadraticA}
A:=U_\hbar (\fd)[d, \psi]/ \{d, \psi\}=C+\hbar J.
\ee
Here $d$ and $\psi$ are both in degree $1$ and weight $1$. This is a quadratic algebra by definition. In this section, we will compute its quadratic dual, and use its Koszul complex to construct a resolution of $\C$.

\subsection{Koszul dual and periodic localization}

Let $A$ be the algebra defined in equation \eqref{eq:quadraticA}. Let $C=\frac{1}{2}\sum c^{ij}x_i\otimes x_j$ where $c^{ij}=(-1)^{|x_i||x_j|}c^{ji}$. 

\begin{Prop}
    The quadratic dual $A^!$ of $A$ is the algebra generated by $\hbar^!, \epsilon^a=x_a^!, d^!, \psi^!$ with relations
    \be\label{eq:comAdual}
    \begin{split}
        \langle (\hbar^!)^2, [\hbar^!, d^!], [\hbar^!,\psi^!], [d^!, \psi^!], [\psi^!,\epsilon^i], [d^!,\epsilon^i], \hspace{15em} \\ [\epsilon^i, \epsilon^j] +(-1)^{|\epsilon^j|(|\epsilon^i|-1)} c^{ij} d^!\otimes \psi^!, [\hbar^!,\epsilon^i] +(-1)^{(|\epsilon^j|+1)|\epsilon^k|} \textstyle 1/2f^{i}_{jk}\epsilon^j\epsilon^k +\langle \epsilon^i, J\rangle d^!\otimes \psi^!\rangle 
    \end{split}
    \ee
\end{Prop}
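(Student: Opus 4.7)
The plan is to extend the quadratic-duality computation for $U_\hbar(\fd)$ in Section~\ref{subsec:universal} to the enlarged algebra $A$. First, I would lay out the generating space $V = \C\hbar \oplus \fd \oplus \C d \oplus \C\psi$ and the complete list of quadratic relations $R \subseteq V\otimes V$ defining $A$: the Weyl--Rees relations $x_i\otimes x_j - (-1)^{|x_i||x_j|}x_j\otimes x_i - \hbar[x_i,x_j]$, the centrality of $\hbar$ against $x_i$, $d$, and $\psi$, the graded commutations $d\otimes x_i - (-1)^{|x_i|}x_i\otimes d$ and its $\psi$-analogue, and the single new defining relation
\[
    d\otimes\psi + \psi\otimes d - \tfrac{1}{2}\sum c^{ij} x_i\otimes x_j - \hbar J,
\]
with $J = \sum_k J_k x_k$. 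Counting degrees of freedom in $V\otimes V$ against the relations above, and comparing with $(n+3)^2 = \dim V^{\otimes 2}$, will confirm that the list in \eqref{eq:comAdual} accounts for the full codimension needed for $R^\perp$.

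The annihilation check uses the graded evaluation pairing $\langle f\otimes g, u\otimes v\rangle = (-1)^{|g||u|} f(u) g(v)$. Pairings that involve only $U_\hbar(\fd)$-generators were already handled in Section~\ref{subsec:universal}. Pairings mixing one old and one new generator---for example $[\psi^!,\epsilon^i]$ against $\psi x_i - (-1)^{|x_i|}x_i\psi$---collapse directly once $|\epsilon^i| = 1 - |x_i|$ is substituted. The three genuinely new verifications are: (i) $[d^!,\psi^!]$ against the new relation, where the surviving pairings from $d\otimes\psi$ and $\psi\otimes d$ cancel after expanding signs; (ii) $[\epsilon^i,\epsilon^j] + (-1)^{|\epsilon^j|(|\epsilon^i|-1)} c^{ij} d^!\otimes\psi^!$ against the new relation, where the graded commutator pairs with $-\tfrac{1}{2}\sum c^{ab} x_a\otimes x_b$ to produce $-c^{ij}$ (using the graded symmetry $c^{ab} = (-1)^{|x_a||x_b|} c^{ba}$), and this is canceled by pairing the correction term with $d\otimes\psi + \psi\otimes d$; (iii) $[\hbar^!,\epsilon^i] + (-1)^{(|\epsilon^j|+1)|\epsilon^k|} \tfrac{1}{2} f^i_{jk}\epsilon^j\epsilon^k + \langle\epsilon^i, J\rangle d^!\otimes\psi^!$ against the new relation, where the first two terms annihilate the Weyl--Rees block exactly as in Section~\ref{subsec:universal}, while $[\hbar^!,\epsilon^i]$ paired with $-\hbar J$ yields $-\langle\epsilon^i, J\rangle$, precisely canceled by the new $d^!\otimes\psi^!$-term paired with $d\otimes\psi + \psi\otimes d$. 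A potential cross-term between $\tfrac{1}{2}f^i_{jk}\epsilon^j\epsilon^k$ and $-C$ reduces to $-\tfrac{1}{4}\sum f^i_{jk} c^{jk}$, which vanishes automatically because $f^i_{jk}$ is graded-antisymmetric in $(j,k)$ while $c^{jk}$ is graded-symmetric.

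The main obstacle will be sign bookkeeping, especially justifying the precise exponent $|\epsilon^j|(|\epsilon^i| - 1)$ appearing in check (ii). I would handle this by computing with general degrees, substituting $|\epsilon^i| = 1 - |x_i|$, and simplifying the resulting permutation signs modulo $2$---the same style of reduction as the identity $(|\epsilon^j|+1)|\epsilon^k| + |\epsilon^k||x_j| \equiv 0 \pmod 2$ that disposes of the $f^i_{jk} c^{jk}$ cross-term. Once every pairing is verified and the dimension count checks out, the relations in \eqref{eq:comAdual} are forced to span all of $R^\perp$, establishing the proposition.
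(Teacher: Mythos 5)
Your overall strategy is exactly the paper's: enumerate $R$, compute $\dim R^\perp$ by complementarity in $V^{\otimes 2}$, exhibit the listed elements as linearly independent elements of the right total dimension, and then verify annihilation pairing by pairing, with the only nontrivial checks being against the new relation $\{d,\psi\}-C-\hbar J$ (and the $f^i_{jk}c^{jk}=0$ symmetry argument disposing of the cross-term, just as in the paper). The sign verifications you outline in (ii) and (iii) are the same computations the paper carries out.

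There is one concrete slip that would make your dimension count fail as written: your ``complete list'' of relations for $A$ omits $d\otimes d$ and $\psi\otimes\psi$. Since $d$ and $\psi$ sit in cohomological degree $1$, the notation $U_\hbar(\fd)[d,\psi]$ means they are odd polynomial variables, so $d^2=\psi^2=0$ are quadratic relations and must be counted in $R$. With them, $\dim R=\tfrac12(n^2+5n+10)$ and $\dim R^\perp=\tfrac12(n^2+7n+8)$, which is exactly the span of the relations in \eqref{eq:comAdual}; without them your count gives $\dim R^\perp=\tfrac12(n^2+7n+12)$, two more than your candidate list provides, and the ``forced to span'' conclusion breaks. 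The same two relations are also what keeps $(d^!)^2$ and $(\psi^!)^2$ \emph{out} of $R^\perp$, i.e.\ what makes $d^!$ and $\psi^!$ honest polynomial (even, non-nilpotent) generators of $A^!$ --- a point your argument implicitly relies on when you treat $d^!\otimes\psi^!$ as a nonzero correction term. Once these two relations are restored, everything else in your outline goes through and agrees with the paper's proof.
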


\begin{proof}
    We wish to find $R^{\perp} \subseteq \fd^! \oplus \hbar^! \oplus d^! \oplus \psi^!$. Since $\mathrm{dim}(R) = 1/2(n^2+5n+10)$, $\mathrm{dim}(R^{\perp}) = 1/2(n^2+7n + 8)$. We claim that the relations in equation (\ref{eq:comAdual}) span $R^\perp$. It is clear that they are linearly independent and have the correct dimension. Consider first the relations $(\hbar^!)^2, [\hbar^!, d^!], [\hbar^!,\psi^!], [d^!, \psi^!], [\psi^!,\epsilon^i]$, and $[d^!,\epsilon^i]$. It's easy to see that these pair trivially with every relation, and is in fact in $R^{\perp}$. Next, consider the relation $[\epsilon^i, \epsilon^j] +  (-1)^{|\epsilon^j|(|\epsilon^i|-1)}c^{ij}d^!\otimes \psi^!$. This pairs trivially with everything except $[d,\psi] = C+\hbar\otimes J$, and gives
\be
    \begin{split}
    \langle [\epsilon^i, \epsilon^j] +& (-1)^{|\epsilon^j|(|\epsilon^i|-1)}c^{ij}d^!\otimes \psi^!, [d, \psi] - \textstyle \sum c^{ij}x_i \otimes x_j-\hbar\otimes J\rangle\\ &=  (-1)^{|\epsilon^j|(|\epsilon^i|-1)}c^{ij}-\frac{1}{2}\left(\sum_{kl} c^{kl}\langle \epsilon^i\epsilon^j ,x_kx_l\rangle - (-1)^{|\epsilon^i||\epsilon^j|}\sum c^{kl}\langle \epsilon^j\epsilon^i ,x_kx_l\rangle\right) \\
    &=  (-1)^{|\epsilon^j|(|\epsilon^i|-1)}c^{ij}-\frac{1}{2}\left((-1)^{|\epsilon^j|(|\epsilon^i|-1)}c^{ij}- (-1)^{|\epsilon^i|}c^{ji} \right)\\
    &= (-1)^{|\epsilon^j|(|\epsilon^i|-1)} c^{ij}-\frac{1}{2}\left((-1)^{|\epsilon^j|(|\epsilon^i|-1)}c^{ij}- (-1)^{|\epsilon^i|}(-1)^{(|\epsilon^i|-1)(|\epsilon^j|-1)}c^{ij} \right)\\
    &=(-1)^{|\epsilon^j|(|\epsilon^i|-1)} c^{ij}-\left( (-1)^{|\epsilon^j|(|\epsilon^i|-1)}c^{ij}\right)=0
    \end{split}
\ee
Next, let us again consider a relation from the previous problem, $[\hbar^!, \epsilon^i] = -(-1)^{(|\epsilon^j|+1)|\epsilon^k|}1/2f^i_{jk}\epsilon^j\epsilon^k$. This pairs trivially with everything, but unfortunately does not pair trivially with $[d,\psi] = C+\hbar J$.  
\be
\langle \textstyle [\hbar^!, \epsilon^i] +(-1)^{(|\epsilon^j|+1)|\epsilon^k|} \frac{1}{2}f^i_{jk}\epsilon^j\epsilon^k, [d,\psi] -C-\hbar\otimes J \rangle = -(-1)^{(|\epsilon^j|+1)|\epsilon^k|}\langle \frac{1}{2}f^i_{jk}\epsilon^j\epsilon^k, C\rangle-\langle \hbar^!\otimes\epsilon^i, \hbar\otimes J\rangle=-\langle \epsilon^i, J\rangle. 
\ee
Here we used that
\be
f^i_{jk}c^{jk}=-f^{i}_{kj}(-1)^{|x_j||x_k|} (-1)^{|x_k||x_j|}c^{kj}=-f^i_{jk}c^{jk}.
\ee
To fix this, we simply add an extra term $\langle \epsilon^i, J\rangle d^!\otimes \psi^!$, and we find the last relation
\be
[\hbar^!, \epsilon^i] +(-1)^{(|\epsilon^j|+1)|\epsilon^k|} \frac{1}{2}f^i_{jk}\epsilon^j\epsilon^k+\langle \epsilon^i, J\rangle d^!\otimes \psi^!.
\ee
This completes the proof. 

\end{proof}

The algebra $A^!$ has two central elements $d^!$ and $\psi^!$. If we denote by $\varphi=\psi^!$ and view $d^!$ as a variable, then $A^!$ is a deformation of $U_\hbar (\fg)^!$, where the commutation relation of $\epsilon^i$ is deformed by $C$, and the action of $\hbar^!$ is deformed by $\hbar J$. We now set $d^!=1$ and treat $\hbar^!$ as a differential on the remaining generators, and denote the resulting DG algebra by $\CC\CE(\fd, J)$. From the commutation relations of equation \eqref{eq:comAdual}, we immediately obtain the following corollary. 

\begin{Cor}
    The algebra $\CC\CE(\fd, J)$ is a quantization of the Poisson algebra $\mathrm{CE}^*(\fd)$ whose Poisson structure is induced by $C$, or equivalently, by $\kappa$. 
\end{Cor}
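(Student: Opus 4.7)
The plan is to unpack the DG algebra $\CC\CE(\fd, J)$ explicitly from the relations in equation \eqref{eq:comAdual} with $d^! = 1$, and then verify the two requirements for being a quantization: (i) the $\varphi = 0$ specialization recovers the DG algebra $\mathrm{CE}^*(\fd)$, and (ii) the leading-order commutator in $\varphi$ reproduces the Poisson bracket induced by $\kappa$.

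First I would describe the underlying graded algebra. With $d^! = 1$, $\psi^! = \varphi$ central, and $\hbar^!$ treated as an odd derivation rather than a generator, the generators reduce to $\epsilon^i$ and $\varphi$. The surviving relations are $[\epsilon^i, \epsilon^j] = -(-1)^{|\epsilon^j|(|\epsilon^i|-1)} c^{ij}\, \varphi$ with $\varphi$ central, so as a filtered associative algebra over $\C[\varphi]$, $\CC\CE(\fd, J)$ is the symmetric algebra $\Sym(\fd^*[-1])[\varphi]$ equipped with a $\varphi$-linear deformation of its product. The differential induced by $\hbar^!$ reads
\be
d(\epsilon^i) \;=\; -(-1)^{(|\epsilon^j|+1)|\epsilon^k|}\,\tfrac12 f^i_{jk}\,\epsilon^j\epsilon^k \;-\; \langle \epsilon^i, J\rangle\,\varphi,
\ee
extended by the graded Leibniz rule and $d\varphi = 0$; that $d$ squares to zero is guaranteed because $\hbar^!$ squares to zero in $A^!$.

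Next, I would set $\varphi = 0$. The deformed commutator collapses to $[\epsilon^i, \epsilon^j] = 0$, so the underlying algebra becomes $\Sym(\fd^*[-1])$, and the differential becomes the usual Chevalley--Eilenberg differential of the graded Lie algebra $\fd$ (exactly as in the remark following the universal enveloping algebra example). Hence $\CC\CE(\fd, J)|_{\varphi = 0} \cong \mathrm{CE}^*(\fd)$ as commutative DG algebras, which gives the classical limit.

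Finally, for the Poisson structure, I would compute $\{a, b\} := \varphi^{-1}[a, b] \bmod \varphi$ on generators: the defining relation $[\epsilon^i, \epsilon^j] = -(-1)^{|\epsilon^j|(|\epsilon^i|-1)} c^{ij}\varphi$ gives $\{\epsilon^i, \epsilon^j\} = -(-1)^{|\epsilon^j|(|\epsilon^i|-1)} c^{ij}$, which is exactly the bivector on $\fd^*[-1]$ dual to the quadratic Casimir $C = \tfrac12 c^{ij} x_i\otimes x_j$, i.e.\ the Poisson bracket on $\mathrm{CE}^*(\fd)$ induced by the perfect pairing $\kappa$. Extending by the graded Leibniz rule, this agrees with the Poisson bracket computed in the proposition identifying $\mathrm{CE}^*(\tgv)$ with $\C[\widehat\CM_H]$, and the $J$-dependent term in $d\epsilon^i$ is a first-order deformation that vanishes modulo $\varphi$, hence contributes nothing at the classical level. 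The whole argument is essentially bookkeeping, so the main obstacle is simply a careful check of the Koszul signs $(-1)^{|\epsilon^j|(|\epsilon^i|-1)}$ to ensure the correspondence between $[\,,\,]$ modulo $\varphi$ and the Poisson bracket induced by $\kappa$ is exact on the nose; once this is done, the corollary follows directly from the preceding proposition.
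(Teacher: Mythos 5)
Your proposal is correct and takes essentially the same approach as the paper, which offers no written proof beyond ``from the commutation relations of equation \eqref{eq:comAdual} we immediately obtain'' --- i.e.\ exactly your two checks: the $\varphi=0$ specialization collapses the relations to those of $\mathrm{CE}^*(\fd)$ with the Chevalley--Eilenberg differential, and the first-order commutator $[\epsilon^i,\epsilon^j]=-(-1)^{|\epsilon^j|(|\epsilon^i|-1)}c^{ij}\varphi$ reproduces the bracket induced by $C$. The one point you assert rather than prove is flatness over $\C[\varphi]$ (that the underlying module is $\mathrm{Sym}(\fd^*[-1])[\varphi]$), but the paper does not address this at this stage either; it is supplied a posteriori by the Koszulity/spectral-sequence argument in the following proposition.
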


We would like to show that $\CC\CE (\fd, J)$ is equal to $\End_{U(\fd)/(C+J)}(\C)$. To do so, we would like to utilize the Koszul complex and show that it is a resolution of $\C$. However, we do not want to directly work with the algebras $A$ and $A^!$. Instead, we perform periodic localization as follows.

Consider $(A^!)^\vee:=\Hom^{\C^\times}_{\C[d^!]}(A^!, \C[d^!])$. That is, we take the graded dual of $A^!$ as a module of $\C[d^!]$. Since $d^!$ is central, the Koszul differential defines a differential on $A\otimes (A^!)^\vee$. Let us consider the subcomplex
\be
K:=dA\otimes \hbar^!(A^!)^\vee\subseteq A\otimes (A^!)^\vee.
\ee
This is obviously a sub-complex. As a vector space, we can identify this complex with
\be
K=U_\hbar (\fd)[\psi]\otimes \hbar^! (A^!)^\vee.
\ee
There is an obvious map $K\to \C[\hbar]\otimes \C[d^!]$, given by sending all the other generators to zero. We prove the following result.

\begin{Prop}
    The map $K\to  \C[\hbar]\otimes \C[d^!]$ is a quasi-isomorphism. 
\end{Prop}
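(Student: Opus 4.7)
My plan is to interpret $K$ as a ``rigidified'' Koszul complex that resolves $\C[\hbar]\otimes\C[d^!]$ by reducing to four independent computations, one for each Koszul-dual pair of generators. The first task is to verify that $K$ is a genuine subcomplex of $A\otimes (A^!)^\vee$. The left ideal $dA$ is automatically a right $A$-submodule of $A$, and $\hbar^!(A^!)^\vee$ is stable under the right actions by $d^!$ and $\psi^!$ since both commute with $\hbar^!$ in $A^!$. The action of $(\hbar^!)_L$ on $\hbar^!(A^!)^\vee$ vanishes because $(\hbar^!)^2 = 0$, and $(d)_R$ on $dA$ vanishes because $d^2 = 0$. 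The delicate point is the action of $(\epsilon^i)_L$: the relation $[\hbar^!,\epsilon^i]=-\tfrac{1}{2}f^i_{jk}\epsilon^j\epsilon^k - \langle\epsilon^i,J\rangle d^!\psi^!$ produces terms outside $\hbar^!(A^!)^\vee$ when commuting $\hbar^!$ past $\epsilon^i$, but these are precisely matched by the relation $\{d,\psi\}=C+\hbar J$ appearing when $(\psi)_R$ acts on $dA$, because the escape terms on the two sides are dual under the defining pairing $(R,R^\perp)=0$.

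Next, I would analyze $K$ via a filtration by the word length in the $(x_i,\epsilon^i)$ and $(\psi,\psi^!)$ pairs, which has the effect of killing the deformation terms $C$ and $J$ on the associated graded. The $E_0$ page then becomes the tensor product of four Koszul-type complexes: the classical Koszul complex for $U_\hbar(\fd)$, which resolves $\C$ by Koszulness (Section~\ref{subsec:universal}); the Koszul complex for the $(\psi,\psi^!)$ pair within $K$, which still resolves $\C$, since $dA$ contains both the $d$- and $d\psi$-strata and $\hbar^!(A^!)^\vee$ contains all powers of $\psi^!$; the $(\hbar,\hbar^!)$ pair restricted to $\hbar^!$-multiples on the dual side, with trivial differential (because $(\hbar^!)^2=0$) and cohomology $\C[\hbar]$; and dually the $(d,d^!)$ pair restricted to $d$-multiples on the $A$-side, with cohomology $\C[d^!]$. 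The $E_1$ page is thus $\C[\hbar]\otimes\C[d^!]$, concentrated in a single filtration degree, and all higher differentials vanish.

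Since the filtration is bounded below and exhaustive, the spectral sequence converges, giving $H^*(K) = \C[\hbar]\otimes\C[d^!]$. The quasi-isomorphism is implemented by the augmentation described immediately before the statement, which factors through this spectral-sequence identification.

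The main obstacle is the subcomplex verification in the first step. The matching of ``escape'' terms between the $\delta_x$ and $\delta_\psi$ pieces of $\delta$ is the Koszul-dual shadow of the defining quadratic relation $\{d,\psi\} = C + \hbar J$, and it is where the twist parameters $C, J$ and the Koszul signs interact most intricately. Everything else is essentially a routine reduction to the Koszulness of $U_\hbar(\fd)$ from Section~\ref{subsec:universal}, together with the trivial observation that the Koszul complex of a polynomial algebra restricted to the dual-exterior multiples degenerates.
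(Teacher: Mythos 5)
Your conclusion and the broad shape (filter, reduce to an undeformed Koszul complex, degenerate) match the paper, but two of the three pillars of your argument fail. First, the claim that $(d)_R$ vanishes on $dA$ because $d^2=0$ is false: for $u\in U_\hbar(\fd)$ one has $d\,u\,\psi\cdot d=\pm\, d\,u\,(C+\hbar J)$, since $\psi d=-d\psi+C+\hbar J$ and only the $d^2$ term drops. This term is not a nuisance to be argued away; it is the \emph{only} place the deformation $C+\hbar J$ enters the complex $K$, and after the localization $\hbar=d^!=1$ it is responsible for the moment-map and central-character terms that make the answer the quantum Hamiltonian reduction rather than the undeformed algebra. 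Relatedly, your subcomplex verification rests on a cancellation of ``escape terms'' between $(\epsilon^i)_L$ and $(\psi)_R$ that cannot occur: $dA=d\cdot A$ is a right ideal, so every right multiplication preserves it with nothing left over, and summands $(x_a)_R\otimes(y^a)_L$ for distinct $a$ produce linearly independent outputs and cannot cancel one another. The actual reason $K$ is a subcomplex is much simpler: the Koszul differential acts by right multiplication on the $A$-factor and by operators dual to \emph{left} multiplication on the $(A^!)^\vee$-factor, while $K$ is cut out by left multiplication by $d$ and by the commuting (right-sided) module structure applied to $\hbar^!$; left and right multiplications commute, so nothing escapes. This is why the paper dispatches the point in one line.

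Second, the filtration by word length in the $(x_i,\epsilon^i)$ and $(\psi,\psi^!)$ pairs does not decouple the complex. The term $\hbar J$ in $\{d,\psi\}=C+\hbar J$ is linear in $\fd$ and so has the same word length as $d\psi$; it therefore survives to your associated graded, and the coupling $d\,u\,\psi\otimes\xi\mapsto\pm\, d\,u\,\hbar J\otimes d^!\xi$ persists on $E_0$, so the page is not a tensor product of four independent Koszul complexes. Worse, the defining relations shift your word length in both directions ($[x,y]=\hbar[x,y]$ lowers it, $\{d,\psi\}=C+\cdots$ raises it, and the $A^!$ relations do both), so it is not clear your filtration is compatible with the differential at all. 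The paper filters instead by total degree in $\hbar$ and $d^!$: every deformation term in the differential on $K$ --- the $U_\hbar$-commutators, the $A^!$-commutators, and the entire $\hbar_R\otimes(\hbar^!)_L$ and $d_R\otimes(d^!)_L$ summands --- strictly raises one of these two degrees, so the associated graded is the undeformed Koszul complex of $\mathrm{Sym}(\fd\oplus\C\psi)$ tensored with $\C[\hbar]\otimes\C[d^!]$, and degeneration at $E_1$ is immediate because $\hbar$ and $d^!$ sit in cohomological degree zero. You should adopt that filtration (or justify yours term by term); as written, the $E_1$ identification is not established.
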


\begin{proof}
 We use a spectral sequence argument. Let us consider the following filtration on $K$
    \be
F_iK=\sum_{j+k=i}\hbar^j\otimes (d^!)^k K.
    \ee
    This is clearly a filtration, since the differential can only raise degrees in $\hbar$ or $d^!$. Taking the associated graded, anything involving $\hbar$ or $d^!$ is set to zero. This includes commutation relations in $U_\hbar (\fd)$ and commutation relations in $A^!$, as well as the $\hbar\otimes \hbar^!$ and $d\otimes d^!$ part of the differential. We find that the associated-graded of $F_iK$ is identified with
    \be
\mathrm{Sym}(\fd\oplus \C\psi)\otimes \bigwedge {}^*(\fd^!\oplus\psi^!)^*\otimes \otimes \C[\hbar]\otimes \C[d^!],
    \ee
    where the differential is just the Koszul differential of $\mathrm{Sym}(\fd\oplus \C\psi)$. Therefore, the cohomology of $\mathrm{Gr}(K)$ is simply $\C[\hbar]\otimes \C[d^!]$. This is the $E_1$-page of the spectral sequence. However, since $\hbar$ and $d^!$ are in degree $0$, no differential appears in this page and the spectral sequence terminates. Consequently, $K\cong \C[\hbar]\otimes \C[d^!]$. 
    
\end{proof}

From this result, we find that $K\vert_{\hbar=d^!=1}\cong \C$. Carefully examine the complex $K\vert_{\hbar=d^!=1}$, we find that it can be identified with
\be
K\vert_{\hbar=d^!=1}=U (\fd)[\psi]\otimes \CC\CE(\fd, J)^*, \qquad d=d\otimes 1+1\otimes d+\sum x_i\otimes \epsilon^i+\psi\otimes \varphi. 
\ee
This is the resolution of $\C$ that we claimed in Section \ref{subsec:mathintro}. As a consequence, we find
\be
\CC\CE (\fd, J)\cong \Hom_{U(\fd)[\psi]\Mod}^{\C^\times}(K, \C)\cong \Hom_{U(\fd)[\psi]\Mod}^{\C^\times}(K, K).
\ee

\subsection{Main example: symplectic reduction}

Now we specialize to the construction from Section \ref{symred}, with the Lie algebra $\fh:=\fg\ltimes V[-1]$ and $\fd=T^*\fh=\tgv$. Recall the commutation relations from equations \eqref{eq:commuth} and \eqref{eq:commutd}. The pairing is given by
\be
\kappa (x_i, t^j)=\delta_i^j, \qquad \kappa (\theta_m, \overline{\theta}^n)=\delta_m^n,
\ee
and the quadratic Casimir element $C=\frac{1}{2}(\sum x_i\otimes t^i+t^i\otimes x_i+\theta_m\otimes \overline{\theta}^m-\overline{\theta}^m\otimes \theta_m)$. We denote $J=\sum J_it^i$. 

\begin{Cor}
    The algebra $\CC\CE(\tgv, J)$ is isomorphic to $\C_\varphi^J[\widehat \CM_H]$.
\end{Cor}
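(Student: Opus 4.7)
The plan is to unwind the Koszul-dual presentation of $\CC\CE(\fd, J)$ from the previous subsection with $\fd = \tgv$, and to match it directly, generator by generator and relation by relation, with the DG model of $\C_\varphi^J[\widehat \CM_H]$ displayed at the end of Section \ref{subsec:higgsquan}. Since both sides admit explicit DG presentations, the corollary reduces to a finite check.

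First I would identify the Koszul-dual generators according to the summand decomposition $\tgv = \fg \oplus V[-1] \oplus V^*[-1] \oplus \fg^*[-2]$ by setting
\begin{equation}
c^i := \epsilon^{x_i}, \qquad \beta_m := \epsilon^{\theta_m}, \qquad \gamma^n := \epsilon^{\overline\theta^n}, \qquad b_i := \epsilon^{t^i}, \qquad \varphi := \psi^!.
\end{equation}
Since $V^! = V^*[-1]$, these lie in cohomological degrees $1, 0, 0, -1, 0$ respectively, matching the generator degrees of the DG model recalled at the end of Section \ref{subsec:higgsquan}. Substituting the explicit pairing $\kappa(x_i, t^j) = \delta_i^j$, $\kappa(\theta_m, \overline\theta^n) = \delta_m^n$ into the relations \eqref{eq:comAdual}, only the dual pairs $(x_i, t^j)$ and $(\theta_m, \overline\theta^n)$ produce nonzero Casimir coefficients $c^{ab}$, so after setting $d^! = 1$ the quadratic-dual relations collapse to
\begin{equation}
[b_i, c^j] = -\delta_i^j\, \varphi, \qquad [\beta_m, \gamma^n] = -\delta_m^n\, \varphi,
\end{equation}
with all other brackets among generators vanishing. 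This is exactly the Weyl--Clifford algebra defining the underlying graded algebra of $\C_\varphi^J[\widehat \CM_H]$.

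Next I would evaluate $\{\hbar^!, -\}$ on each generator by substituting the structure constants \eqref{eq:commuth}--\eqref{eq:commutd} of $\tgv$ into the $\hbar^!$-relation of \eqref{eq:comAdual}. For $c^i$, only $[\fg, \fg]$ contributes, and the character shift $\langle c^i, J\rangle$ vanishes because $J \in \fg^*[-2]$ pairs trivially with $\fg$, giving $-\tfrac{1}{2} f^i_{jk} c^j c^k$. For $\beta_m$ and $\gamma^n$, only the brackets $[\fg, V]$ and $[\fg, V^*]$ contribute, producing the $G$-equivariance terms of \eqref{eq:differential}. For $b_i$, three sources combine: the coadjoint bracket $[x_j, t^k] = -f^k_{jl} t^l$ gives the Chevalley term $-\tfrac{1}{2} f^k_{ij}(c^j b_k - b_k c^j)$ after antisymmetrizing the structure constant; the symplectic bracket $[\theta_m, \overline\theta^n] = \sum_i \rho^n{}_m(x_i)\, t^i$ produces the Weyl-symmetrized quantization of the classical moment-map term $-(x_i v, v^*)$; and the $J$-piece of \eqref{eq:comAdual} contributes exactly $-\langle \epsilon^{t^i}, J\rangle \varphi = -J_i\, \varphi$. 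Together these reproduce the differential of the DG model of $\C_\varphi^J[\widehat \CM_H]$.

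The main obstacle is Koszul-sign bookkeeping. Because $\tgv$ mixes four different parities, the signs $(-1)^{|\epsilon^j|(|\epsilon^i|-1)}$ and $(-1)^{(|\epsilon^j|+1)|\epsilon^k|}$ appearing in \eqref{eq:comAdual} must be tracked carefully to confirm that the graded-symmetric Casimir contribution $c^{x_i, t^j} = \delta_i^j$ unwinds as the anticommutator of the two odd generators $b_i, c^j$, while the graded-antisymmetric contribution $c^{\theta_m, \overline\theta^n} = \delta_m^n$ unwinds as the ordinary commutator of the two even generators $\beta_m, \gamma^n$. A convenient cross-check is that the classical limit $\varphi \to 0$ must recover the Poisson structure \eqref{eq:Poisson}, which via the already-established isomorphism $\mathrm{CE}^*(\tgv) \cong \C[\widehat \CM_H]$ pins down all the normalizations up to the choice of sign convention.
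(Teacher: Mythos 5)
Your proposal is correct and follows essentially the same route as the paper: specialize the quadratic-dual presentation \eqref{eq:comAdual} to $\fd = \tgv$ with the pairing $\kappa(x_i,t^j)=\delta_i^j$, $\kappa(\theta_m,\overline\theta^n)=\delta_m^n$, set $d^!=1$, and match the resulting commutation relations and differential (including the $-J_i\varphi$ shift on $db_i$) against the explicit DG model of $\C_\varphi^J[\widehat\CM_H]$ from Section \ref{subsec:higgsquan}. The paper's own proof is simply a terser version of this same generator-by-generator check.
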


\begin{proof}
  We find that the commutation relation in $\CC\CE(\tgv, J)$ reads
    \be
[\beta_m, \gamma^n]= -\delta^n_m \varphi,\qquad \{b_i, c^j\}=-\delta_i^j \varphi,
    \ee
    and the differential is unchanged on all generators except $b_i$, which is given by
    \be
db_i= -\frac{1}{2}\sum f_{ij}^k (c^j b_k-b_k c^j)-(x_iv, v^*)- J_i \varphi. 
\ee
These are identical to what we find in $\C_\varphi^J[\widehat\CM_H]$ (\textit{c.f.} Section \ref{subsec:higgsquan}).

\end{proof}

\bibliographystyle{amsalpha} 

\bibliography{PI}

\end{document}